\documentclass[a4,11pt]{article}

\usepackage{algorithm}
\usepackage{algorithmicx}
\usepackage{algpseudocode}
\algnewcommand\Input{\item[\textbf{Input:}]}%
\algnewcommand\Output{\item[\textbf{Output:}]}%
\usepackage{amsmath}% sophisticated mathematical formulas with amstex (includes \text{})
\usepackage{mathtools, commath}% fix amsmath deficiencies
\usepackage{amssymb}% sophisticated mathematical symbols with amstex
\usepackage{amsthm}% theorem environments
\usepackage[american]{babel}%
\usepackage{bm}% for bold math symbols
\usepackage{bbm}% only for indicator functions (replace by dsfont and \mathds{1} if type 3 fonts are not allowed)
\usepackage{booktabs}
\usepackage[top=30truemm,bottom=30truemm,left=25truemm,right=25truemm]{geometry}
\usepackage[bookmarksnumbered=true]{hyperref}
\hypersetup{
	colorlinks = true,
	linkcolor = blue,
	anchorcolor = blue,
	citecolor = blue,
	filecolor = blue,
	urlcolor = blue
}
\usepackage{comment}% for commenting out
\usepackage{color} % for textcolor command
\usepackage{csquotes}% recommended by biblatex
\usepackage{eso-pic}%
\usepackage{etoolbox}%
\usepackage{fancyvrb}% for listings
\usepackage[T1]{fontenc}%
\usepackage{graphicx}
\usepackage{caption}
\usepackage{subcaption}
\usepackage{float}
\usepackage{grffile}%
\usepackage[
hyperref,%
table%
]{xcolor}%
\usepackage{ifthen}
\newboolean{PrintVersion}
\setboolean{PrintVersion}{false}
\ifthenelse{\boolean{PrintVersion}}{   % for improved print quality, change some hyperref options
	\hypersetup{  % override some previously defined hyperref options
		citecolor=black,%
		filecolor=black,%
		linkcolor=black,%
		urlcolor=black}
}{} % end of ifthenelse (no else)

\usepackage{latexsym,indentfirst}
\usepackage{listings}% for including source code
\usepackage{lmodern}%
\usepackage{lscape} % for rotation
\usepackage{multirow}%
\usepackage{microtype}%
\usepackage[round]{natbib}
\usepackage{rotating}%
\usepackage{scrlayer-scrpage}%
\usepackage{siunitx}%
\usepackage{tabularx}%
\usepackage{threeparttable} %for table partitioned
\usepackage{tikz}% sophisticated graphics package
\usetikzlibrary{positioning}% for relative positioning of nodes
\usetikzlibrary{calc}
\usetikzlibrary{shapes.geometric}
\usetikzlibrary{arrows.meta}
\usepackage{vruler}% for showing line numbers
\usepackage{wrapfig}%
\usepackage{enumerate}
\usepackage{cancel}

\theoremstyle{definition}
\newtheorem{theorem}{Theorem}[section]

\newtheorem{assumption}[theorem]{Assumption}

\xdefinecolor{lightgray}{RGB}{247, 247, 247}%
\xdefinecolor{semilightgray}{RGB}{240, 240, 240}%
\xdefinecolor{middlegray}{RGB}{127, 127, 127}%
\xdefinecolor{deepskyblue}{RGB}{0, 154, 205}%
\xdefinecolor{chocolate}{RGB}{205, 102, 29}%

\newcommand{\EE}{{\mathbb{E}}}

\newcommand{\II}{{\mathbb{I}}}

\newcommand{\RR}{{\mathbb{R}}}

\newcommand{\n}{{\noindent}}

\definecolor{uared}{rgb}{0.85, 0.0, 0.3}

\title{Semiparametric copula-based quantile regression for semicontinuous outcomes with application to healthcare data}

\author{Guanjie Lyu\footnote{ %Corresponding author at: 
		Department of Mathematics and Statistics, University of Windsor, Canada, E-mail: lvg@\,uwindsor.ca} ,\
	Mohamed Belalia\footnote{Department of Mathematics and Statistics, University of Windsor, Canada, E-mail: Mohamed.Belalia@uwindsor.ca}
	 \ and\
	Abdulkadir Hussein\footnote{Department of Mathematics and Statistics, University of Windsor, Canada, E-mail: ahussein@uwindsor.ca}
}

\begin{document}
	%\gtfamily
	\date{\vspace{-5ex}}
	
	\maketitle

	%%%%%%    TEXT START    %%%%%%
	\begin{abstract}
A semiparametric copula-based two-part quantile regression framework is developed for the analysis of semicontinuous outcomes characterized by a point mass at zero and a continuous positive component. The proposed approach models the occurrence and magnitude processes separately and links them through copula-based conditional distributions, allowing for flexible dependence structures and nonlinear covariate effects across quantiles. Large-sample properties of the resulting estimator are established, and extensive simulation studies demonstrate improved finite-sample performance relative to logistic/linear quantile regression, particularly under nonlinear dependence and substantial zero inflation. An application to healthcare data illustrates how the proposed method provides a nuanced characterization of the association between social deprivation and uncompensated and charity care burdens, revealing heterogeneous and nonlinear effects that are not captured by competing approaches.
	\end{abstract}
	
	% JEL classification and keywords
	\hspace{0mm}\\
%	\emph{MSC classification:} 62H20, 62G20, 62H10, 62H12\\
	\noindent \emph{Keywords:} Asymptotic properties, healthcare data, weak consistency, zero excess.%, Blomqvist's beta, Kendall's tau, asymptotic variance.

%---------------------------------------------------------------------	
	\section{Introduction}\label{sec:introduction}
%---------------------------------------------------------------------	

% description difference between count data and semicontinuous data

% description of existent models, consult with Dr. Hussein

% copula based model comparison, the advantages on non-crossing and monotonicity

% discrete vine or low dimension with parameter estimation in Nasri2023b

\n Quantile regression, introduced by~\citet{Koenker1978}, provides a flexible framework for modeling the conditional distribution of a response variable. Unlike classical mean regression, quantile regression estimates conditional quantiles, enabling researchers to explore the effects of covariates across the entire distribution of the outcome. This is particularly useful when the response variable exhibits heteroscedasticity or when we are interested in the tails of the distribution, such as extreme values. Quantile regression has gained traction in fields such as economics, finance, and medicine, where the conditional distribution's shape and spread can reveal important insights beyond the mean.

In recent years, the integration of copulas into quantile regression has attracted significant attention, especially for modeling complex dependency structures between variables. Copulas are powerful tools that allow for the separate modeling of marginal distributions and the dependence between variables. This makes copula-based methods particularly useful in cases where the assumption of independence between covariates and errors is violated, or where multivariate dependence needs to be captured flexibly. Building on the semiparametric copula-based quantile regression framework of \cite{Noh2015semiparametric}, this line of work was further extended to accommodate censored outcomes by \cite{De2017CopulaQR}. Independently, \cite{Kraus2017} pioneered the use of D-vine copula constructions for quantile regression, demonstrating how flexible pair-copula decompositions can substantially improve the estimation of conditional quantiles. Related contributions by \cite{Remillard2017} developed copula-based conditional quantile estimators that explicitly model the dependence structure between covariates and responses, yielding efficiency gains over purely marginal approaches. 

A challenging problem in regression modeling arises when the data exhibit clumping at zero, which is often referred to as zero-inflated or semicontinuous data. These types of data are prevalent in fields such as health economics, where many individuals might have zero healthcare expenditures, while the rest have continuous positive expenditures. In environmental studies, zero-inflation is common when measuring pollutant levels or rainfall, where many observations are zeros (no pollution or no rain), followed by a continuous range of positive values. Traditional regression models struggle to handle this type of data due to its dual structure: a binary process governing the occurrence of zeros, and a continuous process for the positive values.~\cite{Neelon2016} provided a comprehensive review of models for zero-inflated data, focusing on health services research, and introduced zero-modified count models that combine logistic regression for modeling the probability of zeros with a continuous distribution for the positive values. Similarly,~\citet{Heras2018} and~\citet{Ling2022} extended zero-inflated modeling to quantile regression, allowing for the estimation of linear quantile framework in the presence of zero-inflation.

Despite these advancements, existing zero-inflated quantile regression models have limitations, particularly when handling complex dependency structures in the data. The copula-based framework offers a more flexible approach by separately modeling the binary part (i.e., the probability of a non-zero response) and the continuous part (i.e., the conditional quantiles of the positive outcomes), and then linking these components through copulas. Copulas allow for more accurate modeling of the dependence structure between covariates and both the binary and continuous components, thus improving the overall fit and predictive performance of the model.

In this manuscript, we propose a semiparametric copula-based quantile regression model specifically designed for semicontinuous data. Our approach builds on the work of \cite{Mesfioui2023}, who developed copula-based models for binary regression, and \cite{Kraus2017}, who applied copulas to quantile regression. By combining these methods, we model the probability of a non-zero outcome using a copula-based binary regression model and the positive conditional quantiles using a copula-based quantile regression model. This two-part approach allows us to capture the unique structure of semicontinuous data while flexibly accounting for the dependence between covariates and the outcome.

The motivation for this method comes from real-world datasets commonly found in healthcare, insurance, and environmental studies. For instance, in healthcare expenditure data, many patients incur zero costs during a period, while a smaller proportion incurs varying positive expenditures. In insurance data, many policyholders file no claims, but those who do file claims exhibit a wide range of claim sizes. The proposed model addresses these challenges by accurately capturing both the probability of a non-zero outcome and the distribution of positive outcomes, making it a valuable tool for analyzing data with clumping at zero.

The remainder of the paper is organized as follows. Section~\ref{sec:method} describes the proposed methodology. Section~\ref{sec:simulation} presents results from a simulation study, while 
Section~\ref{sec:application} illustrates the approach using a healthcare data application. Finally, Section~\ref{sec:conclusion} provides concluding remarks.

%---------------------------------------------------------------------	
\section{The proposed method}\label{sec:method}
%---------------------------------------------------------------------	

\n This section develops a semiparametric copula-based quantile regression framework for semicontinuous outcomes. 
The proposed approach explicitly accounts for the mixture structure induced by a point mass at zero and a continuous positive component, resulting in a conditional quantile function that exhibits an inherent discontinuity. 
By modeling the occurrence and magnitude of positive responses separately and linking them through copula-based conditional distributions, the framework accommodates flexible dependence structures while preserving tractable large-sample properties.

\subsection{Semicontinuous distributions and conditional quantiles}\label{sec:semicts distribution}

\n Let $Y$ be a semicontinuous outcome supported on $[0,\infty)$ with a point mass at zero and a
continuous distribution on $(0,\infty)$. Let $\bm X=(X_1,\ldots,X_d)^\top$ denote a vector of
continuous covariates. For a fixed covariate value $\bm x$, define the positive component
$(Y\mid Y>0)$ with conditional distribution function
\[
F_{Y\mid Y>0}(y\mid \bm x)
=
P(Y\le y \mid \bm X=\bm x,\,Y>0), \qquad y>0,
\]
which is assumed to be absolutely continuous with density
$f_{Y\mid Y>0}(\cdot\mid \bm x)$ on $(0,\infty)$. The conditional distribution function of $Y$ given $\bm X=\bm x$ admits the mixture representation, for $y\ge 0$, 
\begin{align}\label{eq:mix-cdf}
F_Y(y\mid \bm x)
&=
P(Y\le y\mid \bm X=\bm x)\notag\\
&=
P(Y=0\mid \bm X=\bm x)
+ P(Y>0\mid \bm X=\bm x)\,F_{Y\mid Y>0}(y\mid \bm x),\notag\\
&= \pi_{0}(\bm{x}) + \{1-\pi_0(\bm x)\}\,F_{Y\mid Y>0}(y\mid \bm x).
\end{align}
For $\tau\in(0,1)$, define the conditional $\tau$-quantile of $Y$ given $\bm X=\bm x$ by the
generalized inverse~\citep{Embrechts2013Inverse}, i.e.,
\[
Q_Y(\tau\mid \bm x)
=
\inf\{y\ge 0:\ F_Y(y\mid \bm x)\ge \tau\}.
\]
Note that, due to the point mass at zero, the conditional quantile function satisfies
\begin{equation}\label{eq:zero_quantile}
Q_Y(\tau\mid \bm x)=0,
\qquad \forall \; 0<\tau\le \pi_{0}(\bm{x}).
\end{equation}
%For $\tau>P(Y=0\mid \bm X=\bm x)$, the quantile is determined by the positive component.
Further, define the
scaled quantile level
\begin{equation}\label{eq:tau-scaled}
\tau_s(\tau,\bm x)
=
\frac{\tau-\pi_{0}(\bm{x})}{1-\pi_{0}(\bm{x})}.
\end{equation}
From Equation~\eqref{eq:mix-cdf}, it follows that, for $\tau>\pi_{0}(\bm{x})$,
\[
F_{Y\mid Y>0}\!\left(Q_Y(\tau\mid \bm x)\mid \bm x\right)
=
\tau_s(\tau,\bm x),
\]
and hence the conditional quantile function is given by
\begin{equation}\label{eq:semicont-quantile}
Q_Y(\tau\mid \bm x)
=
Q_{Y\mid Y>0}\!\left(\tau_s(\tau,\bm x)\mid \bm x\right),
\end{equation}
where
\[
Q_{Y\mid Y>0}(\alpha\mid \bm x)
=
\inf\{y>0:\ F_{Y\mid Y>0}(y\mid \bm x)\ge \alpha\}
\]
denotes the conditional quantile function of the positive component. 

Combining Equations~\eqref{eq:zero_quantile}--\eqref{eq:semicont-quantile} shows that estimation of
$Q_Y(\tau\mid \bm x)$ reduces to modeling
$\pi_{0}(\bm{x})$ and the positive-part conditional quantile function
$Q_{Y\mid Y>0}(\cdot\mid \bm x)$.
In the next section, these two components are linked to covariates through copula-based conditional
distributions, yielding a flexible semiparametric framework for semicontinuous quantile regression.

\subsection{Copula based quantile regression model}

\n Copulas provide a flexible framework for modeling dependence structures independently of marginal
distributions. Let $(\Lambda,\bm W)$ be a $(d+1)$-dimensional random vector with continuous marginal
distribution functions $F_\Lambda$ and $F_{\bm W}=(F_{W_1},\ldots,F_{W_d})^\top$, and define the probability
integral transforms
\[
\Lambda^\ast = F_\Lambda(\Lambda), \qquad \bm W^\ast = F_{\bm W}(\bm W).
\]
Then $(\Lambda^\ast,\bm W^\ast)$ has support on $[0,1]\times[0,1]^d$ and admits a copula representation.
By Sklar’s theorem~\citep{Sklar1959}, there exists a copula $\mathcal C:[0,1]\times[0,1]^d\to[0,1]$
such that
\begin{equation}\label{eq:sklar-general}
P(\Lambda^\ast \le \lambda,\ \bm W^\ast \le \bm w) = \mathcal C(\lambda,\bm w),
\qquad \forall\;(\lambda,\bm w)\in[0,1]\times[0,1]^d.
\end{equation}
When $\mathcal C$ is absolutely continuous, the conditional distribution of $\Lambda^\ast$ given
$\bm W^\ast=\bm w$ is well defined and given by
\[
\mathcal C_{\Lambda^\ast\mid \bm W^\ast}(\lambda\mid\bm w)
=
\frac{\partial^d \mathcal C(\lambda,\bm w)/\partial w_1\cdots\partial w_d}
{\partial^d \mathcal C(1,\bm w)/\partial w_1\cdots\partial w_d},
\qquad \forall\;\lambda\in[0,1].
\]
In the semicontinuous setting considered here, the conditional quantile function derived in
Section~\ref{sec:semicts distribution} naturally decomposes into two components:
the occurrence probability $1-\pi_0(\bm x)$ and the positive-part conditional quantile
$Q_{Y\mid Y>0}(\cdot\mid \bm x)$. By convention~\citep[see for example][]{Neelon2016a}, we consider that the semicontinuous data emerges from two distinct stochastic process: one governing the occurrence of zero values,  and the second determining the observed value when a non-zero response is present. Therefore,  we model these two components separately using copula-based conditional distributions, allowing
for potentially different dependence structures in the binary and continuous parts of the model.
Specifically, the occurrence component is modeled using the copula-based
binary regression framework of~\citet{Mesfioui2023}. Let
\[
Z = \mathbbm 1(Y>0), \qquad p_0 = P(Z=0)=P(Y=0),
\]
where $\II(\cdot)$ denotes the indicator function and define the probability integral transforms
\[
V = F_Z(Z), \qquad \bm U = F_{\bm X}(\bm X).
\]
Then $\bm U\in[0,1]^d$ summarizes the covariates on the uniform scale, while the binary response is
carried by $Z$ (equivalently by $V$). The conditional probability of a positive response is given by
\begin{equation}\label{eq:binary-copula}
1-\pi_0(\bm x)
=
\Pr(Z=1 \mid \bm X=\bm x)
=
1-
C_{V\mid \bm U}\!\left(
p_0 \,\middle|\, F_{\bm X}(\bm x);\boldsymbol\theta_1
\right),
\end{equation}
where \( C_{V\mid \bm U}(\cdot \mid \bm u; \boldsymbol\theta_1) \) denotes the conditional distribution function of \(V\) given \( \bm U=\bm u \), induced by a \((d+1)\)-dimensional copula \(C\) that links the binary outcome and the \(d\)-dimensional covariate vector. The parameter vector \( \boldsymbol\theta_1 \in \Theta_1 \subset \mathbb{R}^q \) governs the dependence structure. Further, the positive-part conditional quantile
$Q_{Y\mid Y>0}\!\left(\tau_s(\tau,\bm x)\mid \bm x\right)$
is modeled using copula-based quantile regression with copula function $D$.
To distinguish this component from the binary regression part, we introduce the probability
integral transforms
\[
V^+ = F_{Y\mid Y>0}(Y\mid \bm X), \qquad
\bm U^+ = F_{\bm X\mid Y>0}(\bm X),
\]
where $V^+\in[0,1]$ corresponds to the positive response and
$\bm U^+\in[0,1]^d$ represents the concomitant covariates associated with $Y\mid Y>0$.
The conditional quantile function is then given by
\begin{equation}\label{eq:positive-copula-qr}
Q_{Y\mid Y>0}\!\left(\tau_s(\tau,\bm x)\mid \bm x\right)
=
F_{Y\mid Y>0}^{-1}\!\left(
D_{V^+\mid \bm U^+}^{-1}\!\left(
\tau_s(\tau,\bm x)\mid F_{\bm X\mid Y>0}(\bm x); \bm\theta_2
\right)
\right),
\end{equation}
where $D_{V^+\mid \bm U^+}(\cdot\mid \bm v;\bm\theta_2)$ denotes the conditional distribution
function induced by a $(d+1)$-dimensional copula $D$ linking the positive response and its
concomitant covariates, and $\bm\theta_2\in\Theta_2\subset\mathbb R^p$ is the associated copula
parameter vector.
See~\citet{Kraus2017} and~\citet{Remillard2017} for further details on copula-based quantile
regression.

With all ingredients at hand, combining
Equations~\eqref{eq:binary-copula} and \eqref{eq:positive-copula-qr},
the two-part conditional quantile function of $Y$ given $\bm X=\bm x$
is given, for any $\tau\in(0,1)$, by
\begin{align}\label{eq:2023-12-13, 8:56PM}
	 &Q_Y(\tau\mid \bm x)
=
\II\!\left\{\tau>\pi_0(\bm x)\right\}
\,
Q_{Y\mid Y>0}\!\left(\tau_s(\tau,\bm x)\mid \bm x\right),\\
	 \text{where} \quad &
	 \begin{cases}
\pi_0(\bm x)
=
C_{V\mid \bm U}\!\left(p_0\mid F_{\bm X}(\bm x);\bm\theta_1\right),\\
Q_{Y\mid Y>0}\!\left(\tau_s(\tau,\bm x)\mid \bm x\right)
=
F_{Y\mid Y>0}^{-1}\!\left(
D_{V^+\mid \bm U^+}^{-1}\!\left(
\tau_s(\tau,\bm x)
\mid F_{\bm X\mid Y>0}(\bm x);\bm\theta_2
\right)
\right).\notag
	 \end{cases}
     \end{align}
\subsection{{The proposed estimator}}

\n To construct the two-part conditional quantile function in
Equation~\eqref{eq:2023-12-13, 8:56PM}, all unknown quantities appearing in the
representation must be replaced by suitable estimators.
The proposed estimation procedure consists of the following considerations:
\begin{enumerate}[(a)]
\item
The jump location in the conditional quantile function is determined by
$p_0 = P(Y=0)$ and the marginal distribution of the covariates $F_{\bm X}$.
These quantities are estimated by their empirical distribution functions,
yielding the estimators $\widehat p_0$ and $\widehat F_{\bm X}$, respectively.

\item
For the positive component, the conditional distribution function
$F_{Y\mid Y>0}$ and the concomitant covariate distribution
$F_{\bm X\mid Y>0}$ are estimated using kernel smoothing methods, resulting in
the estimators $\widehat F_{Y\mid Y>0}$ and $\widehat F_{\bm X\mid Y>0}$.
Kernel smoothing is employed to ensure numerical stability of the inverse
distribution function appearing in the conditional quantile representation.
Bandwidth parameters may be selected using standard data-driven procedures,
such as plug-in or cross-validation methods; see, for example,~\cite{Azzalini1981note},~\cite{ Sarda1991estimating}, and~\cite{Bowman1998bandwidth}.

\item
The dependence parameter vectors $\boldsymbol\theta_1$ and
$\boldsymbol\theta_2$, governing the binary-part copula $C$ and the
positive-part copula $D$, respectively, are estimated using
pseudo-likelihood methods based on the corresponding copula densities.
This approach is adopted for its computational efficiency and favorable
large-sample properties in semiparametric copula models; see, for example,~\cite{Genest1995},~\cite{Joe2005asymptotic}, and~\cite{Chen2006estimation}.
Alternative estimation strategies, such as inference functions for margins~\citep{Joe1996IFM} and simulated method of moments~\citep{Oh2013simulated, Belalia2024generalized}, are also available in this context
but are not pursued here.
\end{enumerate}
Moreover, following the approach of \cite{Ling2022}, we introduce a
piecewise estimator for $Q_Y(\tau\mid\bm x)$ to accommodate the jump
discontinuity as
$\tau$ goes closer to $
C_{V\mid\bm U}\!\left(\widehat p_0 \mid \widehat F_{\bm X}(\bm x);
\widehat{\boldsymbol\theta}_1\right)$. Let $\delta \in (0, 1/2)$, partitioning the support of quantile level $(0, 1)$ into three sub-intervals $A_{1n}, A_{2n}$ and $A_{3n}$:
 \begin{align*}
 	A_{1n} &= \left\{\tau: 0<\tau< {C}_{V|\bm{U}}\left(\widehat{p}_0|\widehat{F}_{\bm{X}}(\bm{x}); \widehat{\bm{\theta}}_1\right)\right\}, \\
 	A_{2n} &= \left\{\tau: {C}_{V|\bm{U}}\left(\widehat{p}_0|\widehat{F}_{\bm{X}}(\bm{x}); \widehat{\bm{\theta}}_1\right) \leq \tau \leq {C}_{V|\bm{U}}\left(\widehat{p}_0|\widehat{F}_{\bm{X}}(\bm{x}); \widehat{\bm{\theta}}_1\right)+n^{-\delta}\right\}, \\
 A_{3n}&= \left\{{C}_{V|\bm{U}}\left(\widehat{p}_0|\widehat{F}_{\bm{X}}(\bm{x}); \widehat{\bm{\theta}}_1\right)+n^{-\delta}<\tau <1\right\}.
 \end{align*}
Then the estimator $\widehat{Q}_Y(\tau |\bm{x})$ is defined as zero on $A_{1n}$ and uses linear interpolation if  $\tau \in A_{2n}$ resulting a piecewise function defined by
\begin{align}\label{eq:2024-06-07, 9:49PM}
\widehat{Q}_Y(\tau |\bm{x})&=0\cdot \II(\tau \in A_{1n}) + \widehat{F}_{Y|Y>0}^{-1}\left({D}_{V^+|\bm{U}^+}^{-1}\left(\widehat{\tau}_s\left({C}_{V|\bm{U}}\left(\widehat{p}_0|\widehat{F}_{\bm{X}}(\bm{x}); \widehat{\bm{\theta}}_1\right)+n^{-\delta}, \bm{x}\right)\big|\widehat{F}_{\bm{X}|Y>0}(\bm{x}); \widehat{\bm{\theta}}_2\right)\right)\notag \\
&\quad  \cdot \frac{\tau-{C}_{V|\bm{U}}\left(\widehat{p}_0|\widehat{F}_{\bm{X}}(\bm{x}); \widehat{\bm{\theta}}_1\right)}{n^{-\delta}}\cdot \II(\tau \in A_{2n})\notag \\
&\quad + \widehat{F}_{Y|Y>0}^{-1}\left({D}_{V^+|\bm{U}^+}^{-1}\left(\widehat{\tau}_s\left(\tau, \bm{x}\right)\big|\widehat{F}_{\bm{X}|Y>0}(\bm{x}); \widehat{\bm{\theta}}_2\right)\right) \cdot \II(\tau \in A_{3n}).
\end{align}
The piecewise construction, together with a shrinking interpolation region
of width $n^{-\delta}$, ensures numerical stability and avoids erratic
behavior of the inverse conditional distribution in a neighborhood of the
jump. Under the regularity conditions stated in Section~2.4, the estimators
introduced above are consistent for their population counterparts, which
justifies the plug-in construction of $\widehat Q_Y(\tau|\bm x)$. For ease of notation, we denote this estimator by $\widehat Q(\tau|\bm x)$
in the remainder of the paper.

\subsection{Weak consistency}

\n This subsection establishes the weak consistency of the proposed two-part copula-based conditional quantile estimator. The main technical challenge arises from the intrinsic discontinuity of the conditional quantile function induced by the point mass at zero, which leads to a nonstandard jump in the quantile process. To address this issue, we adopt a piecewise construction of the estimator around the jump location and study its asymptotic behavior separately on each region of the quantile index. The assumptions stated below ensure regularity of the positive component, smoothness of the copula-based conditional distributions, and root-$n$ consistency of the copula parameter estimators. 

\begin{assumption}\label{ass:2024-06-10, 10:11AM}
The following assumptions are required for the consistency of the proposed estimator $\widehat{Q}(\tau|\bm{x})$.
\begin{enumerate}[(i)]

		\item For any  $\bm{X}=\bm{x}$, 
		\begin{equation}\label{eq:2023-12-13, 8:46PM}
			\lim_{\tau \downarrow 0} Q_Y(\tau | \bm{X}=\bm{x}, Y > 0) = 0.
		\end{equation}

		 \item The conditional distribution function $F_Y(\cdot| \bm{x}, Y>0)$ is absolutely continuous with a positive continuous density $f_{Y|Y>0}(\cdot|\bm{x})$ on $[0,  \infty)$.

		\item $C_{V|\bm{U}} (\cdot ;\bm{\theta}_1)$ is twice continuously differentiable in a neighbourhood of the true parameter $\bm{\theta}_1$ and it is bounded away from zero.
		
		\item The Fisher information matrix $I(\bm{\theta}_1)=\EE\left[-\frac{\partial^2 \log \ell(\bm{\theta}_1, p_0, F_{\bm{X}}(\bm{X}))}{{\partial \bm{\theta}_1\partial \bm{\theta}_1}^\top}\right]$ exists and it is non-singular.
		
		\item $\frac{\partial^2 \log C_{V|\bm{U}}(v|\bm{u}; \bm{\theta}_1)}{\partial v\partial \theta_{1k}}$ and $\frac{\partial^2 \log C_{V|\bm{U}}(v|\bm{u}; \bm{\theta}_1)}{\partial u_j\partial \theta_{1k}}$ are continuous on $\Theta_1 \times (0, 1)\times (0, 1)^d$ for $k=1, \ldots, q, j=1, \ldots, d$ and $\Theta_1\subset \RR^q$.

		\item  $D_{V|\bm{U}} (\cdot ;\bm{\theta}_2)$ is twice continuously differentiable in a neighbourhood of the true parameter $\bm{\theta}_2$ and it is bounded away from zero.
		
		\item The gradients $\nabla_{\bm{\theta}_2}D_{V|\bm{U}}(v|\bm
		u; \bm{\theta}_2)$ and $\nabla_{\bm{u}} D_{V|\bm{U}}(v|\bm
		u; \bm{\theta}_2)$ are continuous.
		
		\item $\widehat{\bm{\theta}}_2-\bm{\theta}_2=n^{-1}\sum_{i=1}^n\bm{\eta}_i+o_p(n^{-1/2})$, where $\bm{\eta}_i=\bm{\eta}(V_i, \bm{U}_i; \bm{\theta}_2)$ is a $p$-dimensional  random vector with zero mean and finite variance.
\end{enumerate}
\end{assumption}

\hyperref[ass:2024-06-10, 10:11AM]{Assumption~\ref{ass:2024-06-10, 10:11AM}} (i) ensures the continuity of the conditional quantile function $Q_Y(\tau|\bm {X}=\bm{x}, Y>0)$ at zero. \hyperref[ass:2024-06-10, 10:11AM]{Assumption~\ref{ass:2024-06-10, 10:11AM}} (ii) is necessary to guarantee that the conditional quantile estimator has a finite variance. \hyperref[ass:2024-06-10, 10:11AM]{Assumption~\ref{ass:2024-06-10, 10:11AM}} (iii)--(v) are assumptions for the copula-based binary regression model.  \hyperref[ass:2024-06-10, 10:11AM]{Assumption~\ref{ass:2024-06-10, 10:11AM}} (vi)--(viii) are assumptions for the copula-based quantile regression model.

Under these assumptions, the following theorem establishes the weak consistency of the proposed two-part conditional quantile estimator $\widehat{Q}(\tau\mid\bm{x})$. 

\begin{theorem}\label{thm:2024-06-09, 8:46PM}
Under \hyperref[ass:2024-06-10, 10:11AM]{Assumption~\ref{ass:2024-06-10, 10:11AM}}, we have, for $\tau \in (0, 1)$ and $\bm{x}\in \RR^d$,
\begin{equation*}
\widehat{Q}\left(\tau \mid \bm{x}\right)\xrightarrow[n\to \infty]{P} {Q}\left(\tau \mid \bm{x}\right),
\end{equation*}
where ``$\xrightarrow[n\to \infty]{P}$'' stands for convergence in probability.
\end{theorem}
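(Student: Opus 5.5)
The plan is to prove consistency of each plug-in ingredient first, and then pass to $\widehat Q(\tau\mid\bm x)$ by the continuous mapping theorem, separately on each region of the quantile index. Fix $\bm x$ and $\tau\in(0,1)$, and write
\[
\widehat\pi_0(\bm x)=C_{V\mid\bm U}\bigl(\widehat p_0\mid \widehat F_{\bm X}(\bm x);\widehat{\bm\theta}_1\bigr).
\]

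\textbf{Step 1 (occurrence part).} I will show $\widehat\pi_0(\bm x)\xrightarrow{P}\pi_0(\bm x)$.
\begin{enumerate}[(a)]
\item By the law of large numbers and Glivenko--Cantelli, $\widehat p_0\to p_0$ and $\widehat F_{\bm X}(\bm x)\to F_{\bm X}(\bm x)$ almost surely.
\item Assumptions (iii)--(v) are the standard pseudo-likelihood conditions of \citet{Genest1995} and \citet{Mesfioui2023}. Under them, $\widehat{\bm\theta}_1$ is $\sqrt n$-consistent, hence consistent.
\item Assumption (iii) makes $C_{V\mid\bm U}$ jointly continuous near $(p_0,F_{\bm X}(\bm x),\bm\theta_1)$, so the continuous mapping theorem gives the claim.
\end{enumerate}

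\textbf{Step 2 (positive part).} I will show that, for any $\alpha\in(0,1)$ and any random $\widehat\alpha_n\xrightarrow{P}\alpha$,
\[
\widehat F_{Y\mid Y>0}^{-1}\Bigl(D^{-1}_{V^+\mid\bm U^+}\bigl(\widehat\alpha_n\mid\widehat F_{\bm X\mid Y>0}(\bm x);\widehat{\bm\theta}_2\bigr)\Bigr)\xrightarrow{P}Q_{Y\mid Y>0}(\alpha\mid\bm x).
\]
\begin{enumerate}[(a)]
\item Assumption (viii) gives $\widehat{\bm\theta}_2\to\bm\theta_2$ in probability.
\item Standard uniform consistency of kernel distribution estimators gives $\widehat F_{\bm X\mid Y>0}\to F_{\bm X\mid Y>0}$ and $\sup_y|\widehat F_{Y\mid Y>0}(y)-F_{Y\mid Y>0}(y)|\to0$ in probability.
\item By (vi)--(vii), $D_{V^+\mid\bm U^+}$ is continuous and strictly increasing in its first argument. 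Its inverse is therefore jointly continuous, and $D^{-1}(\widehat\alpha_n\mid\cdot)\to D^{-1}(\alpha\mid\cdot)=:\beta\in(0,1)$.
\item Assumption (ii) makes $F_{Y\mid Y>0}$ strictly increasing, so its quantile function is continuous at $\beta$. The usual inversion lemma (uniform convergence of distribution functions implies convergence of generalized inverses at continuity points of the limit inverse) then finishes the step.
\end{enumerate}

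\textbf{Step 3 (three regions).} I split on the position of $\tau$ relative to $\pi_0(\bm x)$.
\begin{enumerate}[(a)]
\item \emph{Case $\tau<\pi_0(\bm x)$.} By Step 1, $P(\tau\in A_{1n})\to1$. On that event $\widehat Q=0=Q$.
\item \emph{Case $\tau>\pi_0(\bm x)$.} Since $n^{-\delta}\to0$ and $\widehat\pi_0(\bm x)\to\pi_0(\bm x)$, we get $P(\tau\in A_{3n})\to1$. Also $\widehat\tau_s(\tau,\bm x)\to\tau_s(\tau,\bm x)\in(0,1)$. Step 2 then gives $\widehat Q\to Q_{Y\mid Y>0}(\tau_s\mid\bm x)$, which equals $Q(\tau\mid\bm x)$ by \eqref{eq:semicont-quantile}.
\item \emph{Boundary case $\tau=\pi_0(\bm x)$.} Here $Q(\tau\mid\bm x)=0$, and $\tau$ may fall in any of the three regions with nonvanishing probability.
  \begin{enumerate}[(1)]
  \item On $A_{1n}$ the estimator is $0$.
  \item On $A_{2n}$ it is at most the interpolation endpoint value $\widehat Q$ at $\widehat\pi_0+n^{-\delta}$, because the interpolating factor lies in $[0,1]$.
  \item On $A_{3n}$ it is at most that same endpoint value, by monotonicity in $\tau$.
  \end{enumerate}
  So it suffices to show the endpoint value tends to $0$ in probability. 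Its argument $\widehat\tau_s(\widehat\pi_0+n^{-\delta},\bm x)=n^{-\delta}/(1-\widehat p_0)$ tends to $0$.
\end{enumerate}

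\textbf{Main obstacle.} The hardest part is the limit in Step 3(c): Step 2 only handles fixed interior levels, and here the level tends to $0$.
\begin{enumerate}[(a)]
\item Monotonicity gives the reduction: for any fixed small $\epsilon>0$, with probability tending to one, the argument is below $\epsilon$, so the endpoint value is at most the estimator evaluated at level $\epsilon$.
\item Step 2 sends that bound to $Q_{Y\mid Y>0}\bigl(D^{-1}(\epsilon\mid F_{\bm X\mid Y>0}(\bm x);\bm\theta_2)\mid\bm x\bigr)$.
\item Continuity of $D^{-1}$ at $0$, together with Assumption (i), makes this limit arbitrarily small as $\epsilon\downarrow0$. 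A standard $\epsilon$-argument then concludes.
\end{enumerate}
I expect the delicate point to be justifying uniformity of the kernel inverse near the lower boundary. I would handle it through the monotone sandwich just described rather than through uniform convergence on $(0,1)$.
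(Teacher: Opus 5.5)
Your overall structure matches the paper's: the same three cases for $\tau$ relative to $\pi_0(\bm x)$, with your Steps 1 and 2 standing in for the paper's appeals to Mesfioui et al.\ (2023) and R\'emillard et al.\ (2017, Theorem 2). Your Step 2 also makes explicit the continuity argument for the random level $\widehat\tau_s(\tau,\bm x)$, which the paper leaves implicit.

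\textbf{The gap: Step 3(c)(3) is wrong as stated.} On $A_{3n}$ with $\tau=\pi_0(\bm x)$ you have $\tau>\widehat\pi_0(\bm x)+n^{-\delta}$. Monotonicity in $\tau$ therefore makes $\widehat Q(\tau\mid\bm x)$ \emph{at least} the endpoint value, the opposite of what you claim. You also cannot simply discard this event. Since $A_{3n}=\{\pi_0(\bm x)-\widehat\pi_0(\bm x)>n^{-\delta}\}$, mere consistency of $\widehat\pi_0(\bm x)$, which is all your Step 1 delivers, does not make $P(A_{3n})$ vanish.

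\textbf{How to repair it.} The paper removes this event with a rate: by the $\sqrt n$-asymptotic normality of $\widehat\pi_0(\bm x)$ and $\delta<1/2$, $P(A_{3n})=P\{\sqrt n(\pi_0-\widehat\pi_0)>n^{1/2-\delta}\}\to0$. Within your rate-free framework the fix is equally short. On $A_{3n}$ the estimator is the positive-part estimator at level $\widehat\alpha_n=(\pi_0-\widehat\pi_0)/(1-\widehat\pi_0)$, and $n^{-\delta}/(1-\widehat\pi_0)<\widehat\alpha_n\le|\pi_0-\widehat\pi_0|/(1-\widehat\pi_0)\xrightarrow{P}0$. So with probability tending to one it lies between the endpoint value and the estimator at any fixed level $\epsilon$, and your $\epsilon$-argument goes through.

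\textbf{Two small slips.} First, $\widehat\tau_s(\widehat\pi_0+n^{-\delta},\bm x)=n^{-\delta}/\{1-\widehat\pi_0(\bm x)\}$, not $n^{-\delta}/(1-\widehat p_0)$. Second, the limit in your last step should be $Q_{Y\mid Y>0}(\epsilon\mid\bm x)=F^{-1}_{Y\mid Y>0}\bigl(D^{-1}_{V^+\mid\bm U^+}(\epsilon\mid F_{\bm X\mid Y>0}(\bm x);\bm\theta_2)\bigr)$; you applied $D^{-1}$ twice. This limit tends to $0$ directly by Assumption (i).

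Once repaired, your boundary argument is a genuinely different route from the paper's.
\begin{itemize}
\item \emph{The paper's route.} On $A_{2n}$ the paper writes the interpolation factor as $n^{\delta-1/2}\cdot\sqrt n\{\pi_0(\bm x)-\widehat\pi_0(\bm x)\}=o_p(1)$. It then needs the endpoint value only to be $o_p(n^{1/2-\delta})$.
\item \emph{Your route.} You bound the factor by $1$ and prove instead that the endpoint value tends to $0$. This uses only consistency of $\widehat\pi_0(\bm x)$ and Assumption (i), and is valid for every $\delta>0$. It shows that neither the $\sqrt n$-rate nor the restriction $\delta<1/2$ is needed for pointwise consistency.
\end{itemize}
The price is two-sided control of the positive-part estimator at vanishing levels, and your monotone sandwich supplies only the upper side. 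You still need the endpoint value to be at least $-o_p(1)$. That is automatic if $\widehat F_{Y\mid Y>0}$ charges no mass below $-h$ with bandwidth $h\to0$, e.g.\ with a compactly supported or boundary-corrected kernel. Otherwise you must show that, for each $\eta>0$, $\widehat F_{Y\mid Y>0}(-\eta)$ falls below the vanishing level fed into $\widehat F^{-1}_{Y\mid Y>0}$ with probability tending to one. The paper's route is much less sensitive to this point.
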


\begin{proof}
The proof is provided in \hyperref[app]{Appendix}.
\end{proof}

%---------------------------------------------------------------------	
\section{Simulation study}\label{sec:simulation}
%---------------------------------------------------------------------	

\n This section investigates the finite-sample performance of the proposed copula-based two-part quantile regression estimator through a comprehensive simulation study. 
The simulations are designed to assess estimation accuracy across different quantile levels, degrees of zero inflation, and dependence structures, and to compare the proposed method with existing logistic/linear quantile regression approaches. 
Performance is evaluated using integrated mean squared error (IMSE) and its bias--variance decomposition, allowing for a detailed examination of the trade-off between model flexibility and stability.

\subsection{Simulation setting}

\n To evaluate the finite-sample performance of the proposed copula-based two-part quantile regression estimator, we conduct a simulation study under a variety of data generating processes (DGPs). 
The DGPs are designed to examine the effect of different dependence structures, association strengths, and zero-inflation mechanisms, and to facilitate comparison with a competing logistic/linear quantile regression model.

For the copula-based DGPs, we consider a single continuous covariate and generate data from two distinct copula models: one governing the binary occurrence process and one governing the positive response component. 
We denote by $C_{\theta_1}$ the copula linking the binary indicator $Z=\mathbbm{1}(Y>0)$ with the covariate $X$, and by $D_{\theta_2}$ the copula linking the positive response $Y\mid Y>0$ with its concomitant covariate. 
Three combinations of copula families are examined:
\begin{itemize}
	\item {GC}: Gaussian copula for the binary component and Clayton copula for the positive component;
	\item {GF}: Gaussian copula for the binary component and  Frank copula for the positive component;
	\item {CF}: Clayton copula for the binary component and  Frank copula for the positive component.
\end{itemize}
For each case, the copula parameters $(\theta_1,\theta_2)$ are chosen to represent moderate positive or negative dependence.

As a benchmark, we also generate data from a logistic/linear quantile regression model following \citet{Ling2022}. 
Specifically, the probability of a positive response and the conditional quantile of the positive component are given by
\begin{equation*}
	\mathrm{logit}\{P(Y>0\mid X)\} = \gamma_0 + \gamma_1 X,
\end{equation*}
and
\begin{equation*}
	Q_{Y\mid Y>0}(\tau\mid X) = \beta_0(\tau) + \beta_1(\tau) X,
\end{equation*}
where two specifications are considered:
\begin{itemize}
	\item {LL1}: logistic and linear quantile regression with $\beta_0(\tau)=\sin(2\pi \tau)$ and $\beta_1(\tau)=\sin(2\pi\tau)+\tau^3$,
	\item {LL2}: logistic and linear quantile regression with $\beta_0(\tau)=\sin(2\pi \tau)$ and  $\beta_1(\tau)=\cos(2\pi\tau)+\tau^3$.
\end{itemize}
Table~\ref{table:DGP} summarizes all data generating processes considered in the simulation study. 
Throughout, the covariate $X$ and the positive response $Y\mid Y>0$ follow Kumaraswamy distributions, denoted by $\mathrm{Kumar}(a,b)$, with shape parameters specified in the table.

\begin{table}[h]
	\setlength{\tabcolsep}{6pt}
	\centering
	\caption{Models for data generating processes.}
	\label{table:DGP}
	\scalebox{0.9}{ 
		\begin{tabular}[t]{lcccccccccccc}
			\toprule[1.5pt]
			&DGP&$C$& $D$&$\{\theta_1, \theta_2\}$&$F_X$&$F_Y$\\

			\midrule
			\multirow{3}{*}{Copula-based}
			&GC&Gaussian&Clayton&$\{0.5, 0.5\}$&$\text{Kumar}(5, 16)$&$\text{Kumar}(2, 5)$\\		
						&GF&Gaussian&Frank&$\{-0.5, -0.5\}$&$\text{Kumar}(5, 16)$&$\text{Kumar}(2, 5)$\\	
									&CF&Clayton&Frank&$\{0.5, -0.5\}$&$\text{Kumar}(5, 16)$&$\text{Kumar}(2, 5)$\\	
			\midrule
			&DGP& $\{\gamma_0, \gamma_1\}$&$\beta_0(\tau)$&$\beta_1(\tau)$&---&---\\
			\midrule
			\multirow{2}{*}{\small Logistic/linear quantile}
			&LL1&$\{0, 0.5\}$&$\sin(2\pi\tau)$&$\sin(2\pi\tau)+\tau^3$&---&---\\		
			&LL2&$\{0, 0.5\}$&$\sin(2\pi \tau)$&$\cos(2\pi\tau)+\tau^3$&---&---\\						
									
			\bottomrule[1.5pt]
		\end{tabular}
	} 
	
	%	{\raggedright  \scriptsize \hspace{0.5ex} The bold values indicate a different decision for Bernstein tests comparing  with other tests under $5\%$ nominal level. \par}
	
	\label{table:240911}
\end{table}

Simulating semicontinuous data with covariate-dependent occurrence and magnitude components poses a nontrivial challenge, as the binary indicator $Z=\mathbbm{1}(Y>0)$ and the positive response $Y\mid Y>0$ must both depend on the covariate $X$ while allowing for distinct dependence structures. 
We therefore adopt a copula-based data generation scheme that separately constructs the binary and positive components through probability integral transforms and conditional copula inversions.
This procedure ensures that the marginal distributions of $X$ and $Y\mid Y>0$ are correctly specified, while the dependence between $(Z,X)$ and $(Y\mid Y>0,X)$ is governed by the copulas $C$ and $D$, respectively. 
Algorithm~\ref{alg:generate_data} summarizes the resulting data generation process. All simulations were conducted in \textsf{R} \citep{Rsoftware}. The function \texttt{BiCopHinv2}, used to compute conditional copula inverses, is implemented in the \textbf{VineCopula} package \citep{VineCoppackage}.

\begin{algorithm}[h]
	\caption{Generate Semicontinuous Data for Given Copulas and Margins}
	\label{alg:generate_data}
	\begin{algorithmic}[1]
		\Input Sample size $n$; Copula parameter $par1$ for copula $C$; Copula parameter $par2$ for copula $D$; Probability of zero response $p_0$
		\Output Sample with variables $X$, $Z$, and $Y$
		\State Generate $n$ observations $\{(U_i, V_i)\}_{i=1}^n$ from copula $C$
		\For{each $i$ in $1, \ldots, n$}
		\State Compute $Z_i$ as an indicator function $Z_i = \II(V_i \geq p_0)$
		\EndFor
		\State Obtain $\{X_i\}_{i=1}^n$ using the inverse CDF of $F_X$ and $U_i$
		\State Extract $X_i$ values where $Z_i = 1$ and denote them as $\{X^+_i\}$
		\State Transform the extracted $\{X^+_i\}$ values to $\{U^+_i\}$ using $F_X$
		\State Generate $t_i$ from a uniform $(0, 1)$ distribution for each extracted $\{X^+_i\}$
		\For{each pair $(U^+_i, t_i)$}
		\State Compute $V^+_i \gets \texttt{BiCopHinv2}\, (t_i, U^+_i, \text{family}, \text{par} = par2)$
		\EndFor
		\State Transform $\{V^+_i\}$ to $\{Y^+_i\}$ using the inverse CDF of $F_Y$
		\State Compute $Y_i \gets \text{ifelse}(Z_i = 1, Y^+_i, 0)$ for each $i$
		\State \Return Data frame with variables $X$, $Z$, and $Y$
	\end{algorithmic}
\end{algorithm}

\subsection{Finite sample performance}

\n This subsection examines the finite-sample performance of the proposed copula-based two-part quantile regression estimator under the data generating processes described in Section~3.1. 
Estimation accuracy is assessed over a range of quantile levels and zero-inflation probabilities, with performance measured using the integrated mean squared error (IMSE) and its decomposition into integrated squared bias (IBIAS$^2$) and integrated variance (IVAR). 
Results are reported for different sample sizes and copula configurations, and are compared with the competing logistic/linear quantile regression estimator.

Tables~\ref{table:n100}--\ref{table:n400} summarize the IMSE, IBIAS$^2$, and IVAR of the estimated conditional quantile functions for increasing sample sizes. 
When the data are generated from copula-based models, the proposed estimator consistently outperforms the logistic/linear quantile regression competitor across all quantile levels and zero-inflation settings. 
The improvement is particularly pronounced at higher quantiles, where the competing estimator exhibits substantial variance inflation, reflecting its inability to accommodate nonlinear dependence structures. 
As the sample size increases, the IMSE of the proposed estimator decreases steadily, with reductions driven primarily by variance shrinkage, while bias remains well controlled.

Notably, the relative efficiency gains of the proposed estimator persist even when the copula families governing the binary and positive components differ, indicating robustness to heterogeneous dependence structures. 
In contrast, the logistic/linear quantile regression estimator performs comparably only in low-dependence settings and moderate quantile levels, but deteriorates rapidly as dependence strengthens or as the quantile level approaches the upper tail. 
These findings highlight the trade-off between model flexibility and stability, and demonstrate that explicitly modeling dependence via copulas yields substantial finite-sample benefits in semicontinuous settings.

\begin{table}[H]
	\setlength{\tabcolsep}{6pt}
	\centering
	\caption{Summary of IMSE, $\rm{IBIAS}^2$, and IVAR of the estimated conditional
		quantile functions on the entire $Q_Y ( \tau|x)$ based on 500 samples, each with 100 observations and Kumaraswamy margin with shape parameters $a=5, b=16$ for covariate,  Kumaraswamy margin with shape parameters $a=2, b=5$ for response. All the vlaues are scaled by ($\times 10^3$)}
	
	\scalebox{0.8}{ 
		\begin{tabular}[t]{lcccccccccccccccccc}
			\toprule[1.5pt]
			\multirow{2}{*}{}\multirow{2}{*}{DGP}&&\multirow{2}{*}{$p_0$}&\multirow{2}{*}{$\tau$}&\multicolumn{3}{c}{Proposed}&
			\multicolumn{3}{c}{ZILQR}&\multirow{2}{*}{\small Ratio of IMSE}\\
			\cmidrule(lr){5-7}
			 \cmidrule(lr){8-10} 
			&&&&IMSE&$\rm{IBIAS}^2$&IVAR&IMSE&$\rm{IBIAS}^2$&IVAR\\

			\midrule
			
			\multirow{9}{*}{GC}
			&& \multirow{3}{*}{$0.1$}&50\%&0.95&0.08&0.87&12.98&5.36&7.62&85.7\%\\
			&&&70\%&1.76&0.38&1.38&26.24&12.99&13.25&50.0\%\\
			&&&90\%&6.72&3.16&3.56&52.55&32.07&20.48&39.6\%\\
            \cline{2-11}
            && \multirow{3}{*}{$0.2$}&50\%&1.12&0.11&1.01&8.98&3.57&5.41&85.7\%\\
			&&&70\%&1.55&0.22&1.33&20.49&9.07&11.42&50.0\%\\
			&&&90\%&5.30&2.26&3.04&48.00&25.86&22.14&39.6\%\\
              \cline{2-11}
			&&\multirow{3}{*}{ 0.4}&50\%&1.35&0.14&1.21 &6.23&2.10&4.13&125.7\%\\
			&&&70\%&1.79&0.18&1.61&12.31 &4.83&7.48&82.1\%\\
			&&&90\%&4.00& 1.28&2.72&29.67&11.68&17.99&48.5\%\\

						\midrule
			
			\multirow{9}{*}{GF}
			&&
			\multirow{3}{*}{$0.1$}&50\%&1.90&0.14&1.76 &10.99&2.35&8.64&66.6\%\\
			&&&70\%&5.31&2.15&3.16&18.07&5.03&13.04&50.0\%\\
			&&&90\%&20.34&12.37&7.97&32.95&13.90&19.05&38.8\%\\
             \cline{2-11}
			&&\multirow{3}{*}{ 0.2}&50\%&1.57&0.13&1.44 &9.24&2.15&7.09&66.6\%\\
			&&&70\%&2.64&0.23&2.41&16.18&4.36&11.82&50.0\%\\
			&&&90\%&10.59&5.36&5.23&33.14&13.56&19.58&38.8\%\\
             \cline{2-11}
 			&&\multirow{3}{*}{ 0.4}&50\%&1.29&0.07&1.22 &7.37&1.71&5.66&66.6\%\\
			&&&70\%&2.22&1.11&2.11&13.72&3.75&9.97&50.0\%\\
			&&&90\%&4.29&0.49& 3.80&33.35&11.76&21.59&38.8\%\\

						\midrule
			
						\multirow{9}{*}{CF}
         &&
		\multirow{3}{*}{$0.1$}&50\%&3.93&0.48 &3.45&14.01& 3.89&10.12 &36.0\%\\
		&&&70\%&7.13&1.24&5.89&24.30&8.46&15.84&23.0\%\\
		&&&90\%& 21.20&10.43&10.77&41.44&20.98&20.46&21.9\%\\
                \cline{2-11}
		&&\multirow{3}{*}{ 0.2}&50\%&2.26&0.29&1.97&10.53&2.97&7.56&0.06\%\\
		&&&70\%&4.26&0.45&3.81&23.43&8.27&15.16&11.3\%\\
		&&&90\%&13.08&5.22&7.86&46.71&25.57&21.14&17.9\%\\
		              \cline{2-11}
			&&\multirow{3}{*}{0.4}&50\%&1.62&0.05&1.57 &10.72&5.08&5.64&125.7\%\\
			&&&70\%&3.60&0.33&3.27&21.72&7.89&13.83&82.1\%\\
			&&&90\%&8.65&1.64&7.01&53.26&27.14&26.12&48.5\%\\

			\bottomrule[1.5pt]
		\end{tabular}
	} 
	
%	{\raggedright  \scriptsize \hspace{0.5ex} The bold values indicate a different decision for Bernstein tests comparing  with other tests under $5\%$ nominal level. \par}
	
	\label{table:n100}
\end{table}

\begin{table}[H]
	\setlength{\tabcolsep}{6pt}
	\centering
	\caption{Summary of IMSE, $\rm{IBIAS}^2$, and IVAR of the estimated conditional
		quantile functions on the entire $Q_Y ( \tau|x)$ based on 500 samples, each with 200 observations and Kumaraswamy margin with shape parameters $a=5, b=16$ for covariate,  Kumaraswamy margin with shape parameters $a=1, b=1$ for response. All the vlaues are scaled by ($\times 10^3$)}
	
	\scalebox{0.8}{ 
		\begin{tabular}[t]{lcccccccccccccccccc}
			\toprule[1.5pt]
			\multirow{2}{*}{}\multirow{2}{*}{DGP}&&\multirow{2}{*}{$p_0$}&\multirow{2}{*}{$\tau$}&\multicolumn{3}{c}{Proposed}&
			\multicolumn{3}{c}{ZILQR}&\multirow{2}{*}{\small Ratio of IMSE}\\
			\cmidrule(lr){5-7}
			 \cmidrule(lr){8-10} 
			&&&&IMSE&$\rm{IBIAS}^2$&IVAR&IMSE&$\rm{IBIAS}^2$&IVAR\\

			\midrule
			
			\multirow{9}{*}{GC}
			&& \multirow{3}{*}{$0.1$}&50\%&0.54&0.08&0.46&12.73&7.39&5.34&85.7\%\\
			&&&70\%&0.96&0.21&0.75&27.78&18.01&9.77&50.0\%\\
			&&&90\%&3.98&1.74&2.24&62.36&48.69&13.67&39.6\%\\
             \cline{2-11}
            && \multirow{3}{*}{$0.2$}&50\%&0.65&0.11&0.54&8.04&4.61&3.43&85.7\%\\
			&&&70\%&0.90&0.19&0.71&19.52&11.81&7.71&50.0\%\\
			&&&90\%&3.15&1.32&1.83&52.03&36.41&15.62&39.6\%\\
             \cline{2-11}
			&&\multirow{3}{*}{ 0.4}&50\%&0.79&0.16&0.63 &4.80&2.42&2.38&125.7\%\\
			&&&70\%&1.10&0.23&0.87&11.11&6.17&4.94&82.1\%\\
			&&&90\%&2.30& 0.78&1.52&30.98&18.31&12.67&48.5\%\\

						\midrule
			
			\multirow{9}{*}{GF}
			&&
			\multirow{3}{*}{$0.1$}&50\%&1.16&0.14&1.02 &7.38&2.33&5.05&66.6\%\\
			&&&70\%&3.15&1.39&1.76&15.36&6.03&9.33&50.0\%\\
			&&&90\%&15.68&10.87&4.81& 29.94& 18.78&11.16&38.8\%\\
            \cline{2-11}
			&&\multirow{3}{*}{ 0.2}&50\%&0.83&0.08&0.75 &6.15&2.57&3.58&50.0\%\\
			&&&70\%&1.64&0.20&1.44&12.91&5.52&7.39&51.8\%\\
			&&&90\%&6.73&3.44& 3.29&32.37&19.64&12.73&60.0\%\\
             \cline{2-11}
			&&\multirow{3}{*}{ 0.4}&50\%&0.71&0.08&0.63 &5.03&2.05&2.98&50.0\%\\
			&&&70\%&1.20&0.11&1.09&9.60&4.22&5.38&51.8\%\\
			&&&90\%&2.76& 0.30& 2.46&27.15&14.92&12.23&60.0\%\\
			
						\midrule
			
						\multirow{9}{*}{CF}
&&
		\multirow{3}{*}{$0.1$}&50\%&1.95&0.26 &1.69&10.67&4.17&6.50 &36.0\%\\
		&&&70\%&3.52&0.43&3.09&22.04&10.34&11.70&23.0\%\\
		&&&90\%&12.68&6.29&6.39&44.83&31.29&13.54&21.9\%\\
               \cline{2-11}
		&&\multirow{3}{*}{ 0.2}&50\%&1.49&0.21&1.28&8.92&3.78&5.14&0.06\%\\
		&&&70\%&2.81&0.18&2.63&21.82&10.82&11.00&11.3\%\\
		&&&90\%&7.33&1.74&5.59&52.78&38.76&14.02&17.9\%\\
		             \cline{2-11}
			&&\multirow{3}{*}{ 0.4}&50\%&0.91&0.02& 0.89&8.91&5.65&3.26&50.0\%\\
			&&&70\%&2.46&0.31&2.15&18.85&9.33&9.52&51.8\%\\
			&&&90\%&5.97&0.76&5.21&59.40&39.82&19.58&60.0\%\\

			\bottomrule[1.5pt]
		\end{tabular}
	} 
	
%	{\raggedright  \scriptsize \hspace{0.5ex} The bold values indicate a different decision for Bernstein tests comparing  with other tests under $5\%$ nominal level. \par}
	
	\label{table:n200}
\end{table}

\begin{table}[H]
	\setlength{\tabcolsep}{6pt}
	\centering
	\caption{Summary of IMSE, $\rm{IBIAS}^2$, and IVAR of the estimated conditional
		quantile functions on the entire $Q_Y ( \tau|x)$ based on 500 samples, each with 400 observations and Kumaraswamy margin with shape parameters $a=5, b=16$ for covariate,  Kumaraswamy margin with shape parameters $a=1, b=1$ for response. All the vlaues are scaled by ($\times 10^3$)}
	
	\scalebox{0.8}{ 
		\begin{tabular}[t]{lcccccccccccccccccc}
			\toprule[1.5pt]
			\multirow{2}{*}{}\multirow{2}{*}{DGP}&&\multirow{2}{*}{$p_0$}&\multirow{2}{*}{$\tau$}&\multicolumn{3}{c}{Proposed}&
			\multicolumn{3}{c}{ZILQR}&\multirow{2}{*}{\small Ratio of IMSE}\\
			\cmidrule(lr){5-7}
			 \cmidrule(lr){8-10} 
			&&&&IMSE&$\rm{IBIAS}^2$&IVAR&IMSE&$\rm{IBIAS}^2$&IVAR\\

			\midrule
			
			\multirow{9}{*}{GC}
			&& \multirow{3}{*}{$0.1$}&50\%&0.29&0.06&0.23&11.65&8.26&3.39&85.7\%\\
			&&&70\%&0.53&0.16&0.37&27.02&20.04&6.98&50.0\%\\
			&&&90\%&2.45&1.15&1.30&64.81&55.30&9.51&39.6\%\\
              \cline{2-11}
            && \multirow{3}{*}{$0.2$}&50\%&0.37&0.08&0.29&7.51&5.24&2.27&85.7\%\\
			&&&70\%&0.51&0.15&0.36&17.87&12.94&4.93&50.0\%\\
			&&&90\%&1.87&0.91&0.96&51.21&40.59&10.62&39.6\%\\
              \cline{2-11}
			&&\multirow{3}{*}{0.4}&50\%&0.45&0.13&0.32 &4.27&3.12&1.15&125.7\%\\
			&&&70\%&0.61&0.17&0.44&10.11&7.17&2.94&82.1\%\\
			&&&90\%&1.41& 0.63&0.78&30.54&22.38&8.16&48.5\%\\

						\midrule
			
			\multirow{9}{*}{GF}
			&&
			\multirow{3}{*}{$0.1$}&50\%&0.73&0.07&0.66 &6.03&2.73&3.30&66.6\%\\
			&&&70\%&1.99& 0.78&1.21&13.95 &7.19& 6.76&50.0\%\\
			&&&90\%&12.68&9.11 &3.57&32.72&25.10 &7.62&38.8\%\\
             \cline{2-11}
			&&\multirow{3}{*}{ 0.2}&50\%&0.45&0.03&0.42 &5.32&3.03&2.29&50.0\%\\
			&&&70\%&1.01&0.11&0.90&11.50&6.59&4.91&51.8\%\\
			&&&90\%&4.55&2.03&2.52&34.77&25.42&9.35&60.0\%\\
              \cline{2-11}
			&&\multirow{3}{*}{0.4}&50\%&0.36&0.03&0.33 &4.16&2.71&1.45&125.7\%\\
			&&&70\%&0.60&0.04&0.56&8.69&5.50&3.19&82.1\%\\
			&&&90\%&1.65& 0.16&1.49&27.68&19.41&8.27&48.5\%\\
			
						\midrule
			
						\multirow{9}{*}{CF}
         &&
		\multirow{3}{*}{$0.1$}&50\%&1.04&0.09 &0.95&8.45&4.43&4.02 &36.0\%\\
		&&&70\%&2.24&0.35&1.89&18.82&10.61&8.21&23.0\%\\
		&&&90\%&7.97&3.46&4.51&47.56&38.83&8.73&21.9\%\\
                \cline{2-11}
		&&\multirow{3}{*}{ 0.2}&50\%&0.72&0.08&0.64&6.99&4.06&2.93&0.06\%\\
		&&&70\%&1.73&0.25&1.48&18.54&11.08&7.46&11.3\%\\
		&&&90\%&4.30&0.68&3.62&53.78&44.14&9.64&17.9\%\\
		              \cline{2-11}
			&&\multirow{3}{*}{0.4}&50\%&0.45&0.01&0.44 &8.35&6.43&1.92&125.7\%\\
			&&&70\%&1.22&0.14&1.08&17.66&11.61&6.05&82.1\%\\
			&&&90\%&3.26&0.23&3.03&62.10&48.89&13.21&48.5\%\\

			\bottomrule[1.5pt]
		\end{tabular}
	} 
	
%	{\raggedright  \scriptsize \hspace{0.5ex} The bold values indicate a different decision for Bernstein tests comparing  with other tests under $5\%$ nominal level. \par}
	
	\label{table:n400}
\end{table}

\subsection{Misspecifications and robustness}

\n This subsection investigates the robustness of the proposed copula-based estimator under model misspecification error.  In particular, we examine the impact of fitting incorrect copula families for the binary and positive components, as well as departures from the copula-based data generating mechanism. 
The objective is to assess the stability of the estimator when the assumed dependence structure does not coincide with the true underlying model, a scenario commonly encountered in practice.

Table~\ref{table:LL} evaluates the robustness of the proposed estimator under two distinct forms of model misspecification. 
When the data are generated from copula-based models but an incorrect copula family is used for estimation, a moderate loss of efficiency is observed, as indicated by increased IMSE values. 
Despite this loss, the proposed estimator remains numerically stable across all quantile levels, with performance degradation driven primarily by variance inflation rather than systematic bias. 
This behavior suggests that the underlying two-part decomposition and piecewise quantile structure are preserved even when the assumed dependence model is misspecified.

In addition, when the data are generated from the logistic/linear quantile regression models LL1 and LL2, the proposed estimator continues to perform competitively and, in several scenarios, attains lower IMSE than the logistic/linear quantile regression estimator, despite the latter being correctly specified. 
This advantage is most pronounced under the LL2 design, where the nonlinear quantile slope induces complex dependence patterns that are not adequately captured by a linear conditional quantile formulation. 
As a consequence, the competing estimator exhibits substantial variance inflation, whereas the copula-based estimator adapts to the induced dependence through its flexible conditional distribution structure, resulting in improved finite-sample stability even under structural misspecification.

Considered collectively, these results indicate that the proposed copula-based framework achieves a favorable balance between modeling flexibility and robustness, making it particularly well suited for the analysis of semicontinuous data arising in applications with complex and potentially misspecified dependence structures.

\begin{table}[H]
	\setlength{\tabcolsep}{6pt}
	\centering
	\caption{Summary of IMSE, $\rm{IBIAS}^2$, and IVAR of the estimated conditional
		quantile functions on the entire $Q_Y ( \tau|x)$ based on 500 samples, each with 400 observations and Kumaraswamy margin with shape parameters $a=5, b=16$ for covariate,  Kumaraswamy margin with shape parameters $a=2, b=5$ for response. All the vlaues are scaled by ($\times 10^3$)}
	
	\scalebox{0.8}{ 
		\begin{tabular}[t]{lcccccccccccccccccc}
			\toprule[1.5pt]
			\multirow{2}{*}{}\multirow{2}{*}{DGP}& \multirow{2}{*}{Fits}&\multirow{2}{*}{$p_0$}&\multirow{2}{*}{$\tau$}&\multicolumn{3}{c}{Proposed}&
			\multicolumn{3}{c}{ZILQR}&\multirow{2}{*}{\small Ratio of IMSE}\\
			\cmidrule(lr){5-7}
			 \cmidrule(lr){8-10} 
			&&&&IMSE&$\rm{IBIAS}^2$&IVAR&IMSE&$\rm{IBIAS}^2$&IVAR\\
			\midrule

			\multirow{18}{*}{GC}
			&\multirow{9}{*}{FC}&
			\multirow{3}{*}{$0.1$}&50\%&0.51&0.28&0.23 &11.65&8.26&3.39&85.7\%\\
			&&&70\%&0.60&0.19&0.41&27.02&20.04&6.98&50.0\%\\
			&&&90\%&2.17&0.65&1.52&64.81&55.30&9.51&39.6\%\\
             \cline{3-11}
			&&\multirow{3}{*}{ 0.2}&50\%&0.58&0.18&0.40 &7.51&5.24&2.27&85.7\%\\
			&&&70\%&0.80&0.42&0.38&17.87&12.94&4.93&50.0\%\\
			&&&90\%&1.78&0.52&1.26&51.21&40.59&10.62&39.6\%\\
             \cline{3-11}
 			&&\multirow{3}{*}{ 0.4}&50\%&0.53&0.15&0.38 &4.27&3.12&1.15&125.7\%\\
			&&&70\%&0.89&0.32&0.57&10.11&7.17&2.94&82.1\%\\
			&&&90\%&1.74&0.79&0.95&30.54&22.38&8.16&48.5\%\\
            \cline{3-11}
&\multirow{9}{*}{GF}&
			\multirow{3}{*}{$0.1$}&50\%&1.81&1.16&0.65 &11.65&8.26&3.39&85.7\%\\
			&&&70\%&3.54&2.27&1.27&27.02&20.04&6.98&50.0\%\\
			&&&90\%&5.77&2.05&3.72&64.81&55.30&9.51&39.6\%\\
             \cline{3-11}
			&&\multirow{3}{*}{ 0.2}&50\%&1.27&0.84&0.43 &7.51&5.24&2.27&85.7\%\\
			&&&70\%&2.30&1.37&0.93&17.87&12.94&4.93&50.0\%\\
			&&&90\%&4.96&2.39&2.57&51.21&40.59&10.62&39.6\%\\
             \cline{3-11}
 			&&\multirow{3}{*}{ 0.4}&50\%&1.34&1.04&0.30 &4.27&3.12&1.15&125.7\%\\
			&&&70\%&1.57&0.99&0.58&10.11&7.17&2.94&82.1\%\\
			&&&90\%&3.05&1.44&1.61&30.54&22.38&8.16&48.5\%\\

									\midrule
			
						\multirow{3}{*}{LL1}
						
%			&$0.5$-$(\log(\tau), \tau^3)$&---&0.322&0.183&0.139&0.028& 0.007&0.021&1150.0\%\\
&\multirow{3}{*}{LL1}&\multirow{3}{*}{---}&50\%&913.05&853.95&59.10&2006.31&1986.29&20.02& 132.0\%\\
			&&&70\%&411.95&303.78&108.17&1307.34&1286.45&20.89&401.0\%\\
			&&&90\%&1886.51&1796.42&90.09&1909.52&1899.29&10.23&112.7\%\\
            \midrule
									\multirow{3}{*}{LL2}
&\multirow{3}{*}{LL2}&\multirow{3}{*}{---}&50\%&755.01&670.04&84.97&1938.03&1901.74&36.29&43.4\%\\
			&&&70\%&593.67&438.34&155.33&1243.82&1205.24&38.58&32.8\%\\
			&&&90\%&3362.00&3235.96&126.24&4106.41&4077.58&28.83&77.5\%\\

			\bottomrule[1.5pt]
		\end{tabular}
	} 
	
%	{\raggedright  \scriptsize \hspace{0.5ex} The bold values indicate a different decision for Bernstein tests comparing  with other tests under $5\%$ nominal level. \par}
	
	\label{table:LL}
\end{table}

%---------------------------------------------------------------------	
\section{Healthcare data analysis}\label{sec:application}
%---------------------------------------------------------------------	

\n This section illustrates the practical performance of the proposed copula-based two-part quantile regression framework through an application to a real-world semicontinuous outcome. 
The analysis aims to demonstrate how explicitly modeling both the occurrence and magnitude of positive responses, together with flexible dependence structures, yields a nuanced characterization of conditional quantile effects that cannot be captured by standard two-part or linear quantile regression approaches. 
The empirical results highlight the interpretability and stability of the proposed method in a realistic setting involving substantial zero inflation and heterogeneous covariate effects.

\subsection{Motivating dataset}

\n The healthcare dataset analyzed in this study comprises three variables: uncompensated care burden, charity burden, and the weighted Social Deprivation Index (SDI) score, encompassing a total of 4,700 observations.

\subsection{Model results}

\noindent A primary objective of this analysis is to investigate how uncompensated care burden and charity care burden vary with changes in the weighted Social Deprivation Index (SDI) score. 
To this end, we consider three modeling approaches:
\begin{enumerate}[(i)]
    \item direct-fit linear quantile regression,
    \item logistic/linear quantile regression, and
    \item the proposed copula-based two-part quantile regression.
\end{enumerate}
For each outcome, the estimated probabilities of occurrence and the conditional quantiles at the $50\%$, $70\%$, and $90\%$ levels are displayed in \hyperref[fig:uncomp]{Fig.~\ref{fig:uncomp}} and \hyperref[fig:charity]{Fig.~\ref{fig:charity}}. 
Pointwise $95\%$ confidence bands, constructed using 300 bootstrap replications, are also shown to quantify estimation uncertainty.

When comparing the proposed copula-based quantile regression model with the two linear quantile regression approaches, several important differences emerge. 
The linear structure imposed by the competing models limits their ability to capture the nonlinear relationships evident in the data, particularly in the presence of substantial zero inflation. 
In this application, the direct-fit linear quantile regression model systematically yields smaller estimated slopes and narrower conditional distributions than its two-part counterpart, reflecting the inadequacy of a single linear specification for modeling semicontinuous healthcare outcomes. 
By contrast, the proposed copula-based framework accommodates nonlinear dependence and provides more flexible and stable quantile estimates across the distribution.

As shown in \hyperref[fig:uncomp]{Fig.~\ref{fig:uncomp}} and \hyperref[fig:charity]{Fig.~\ref{fig:charity}}, the association between the Social Deprivation Index (SDI) and both uncompensated and charity care burdens exhibits a concave pattern across quantile levels. 
At low to moderate SDI values, higher social deprivation is associated with increased probabilities and magnitudes of care burden, consistent with the interpretation that socioeconomic disadvantage is linked to greater unmet healthcare needs and financial vulnerability.

At higher SDI levels, however, the estimated probability of experiencing a care burden begins to decline. 
One plausible explanation is that in areas of extreme deprivation, individuals may forgo healthcare services altogether due to severe access barriers, thereby reducing observed utilization and the measured burden of uncompensated or charity care. 
An alternative, nonexclusive explanation is that residents in highly deprived areas may be more likely to qualify for public assistance programs or receive care through safety-net providers, which can partially offset uncompensated care exposure. 
Although the available data do not allow these mechanisms to be directly distinguished, the observed downturn suggests that the relationship between social deprivation and care burden is inherently nonlinear.

Finally, the decreasing trend at higher SDI levels is less pronounced for the $90$th quantile than for the $50$th and $70$th quantiles. 
This pattern is consistent with the interpretation that extreme levels of care burden correspond to a relatively small subset of individuals, for whom reduced utilization or access constraints play a more limited role in determining the conditional magnitude of burden. 
As a result, the upper tail of the distribution remains comparatively stable even as overall care utilization declines in highly deprived settings.

%%%%%%%%%%%%%%%%%%%%%%%%%%%%%%%%%%%%%%%%%
\begin{figure}[H]
	\centering
	\begin{subfigure}{.49\textwidth}
		\makebox[\textwidth][c]{\includegraphics[width=3.4in, height=2.7in]{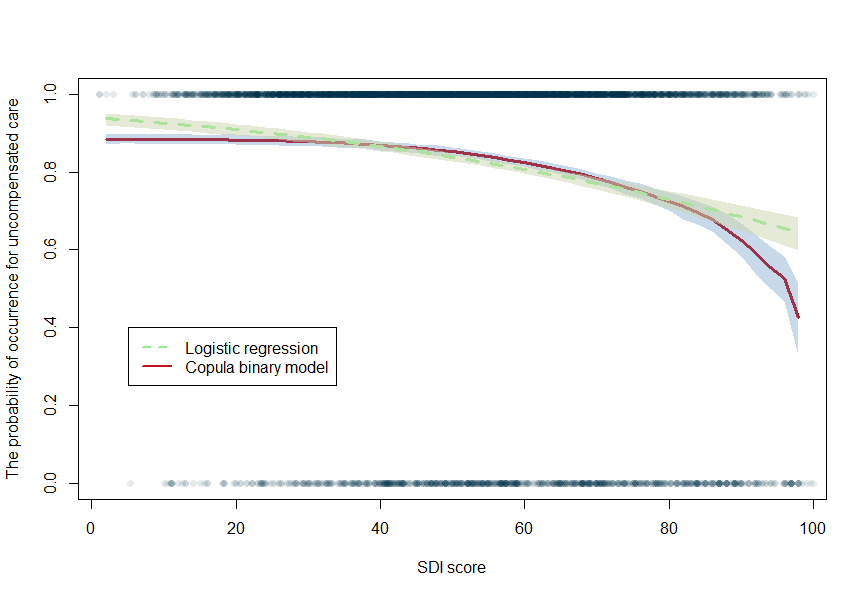}}%	
		\vspace{0.5cm}
	\end{subfigure}
	\begin{subfigure}{.5\textwidth}
		\makebox[\textwidth][c]{\includegraphics[width=3.4in, height=2.7in]{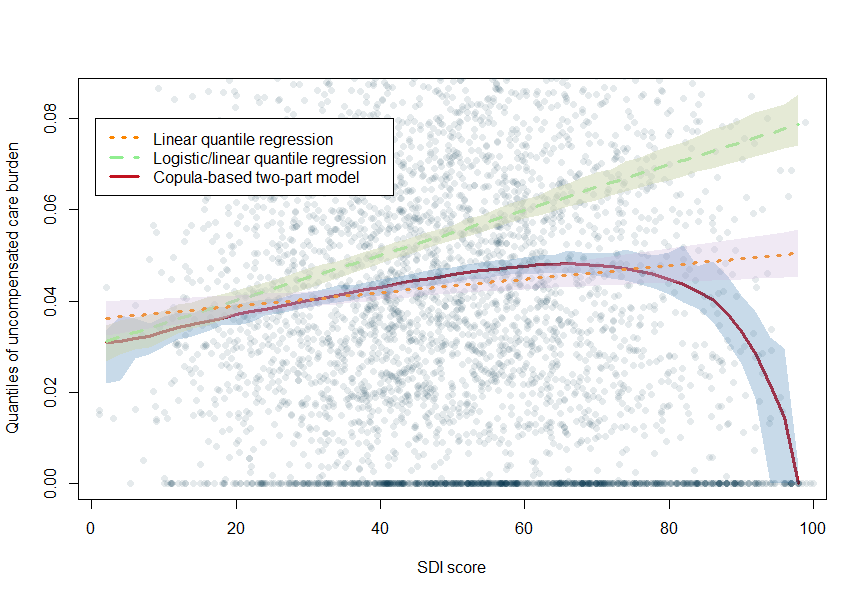}}%	
		\vspace{0.5cm}
	\end{subfigure}
	\begin{subfigure}{.49\textwidth}
	\makebox[\textwidth][c]{\includegraphics[width=3.4in, height=2.7in]{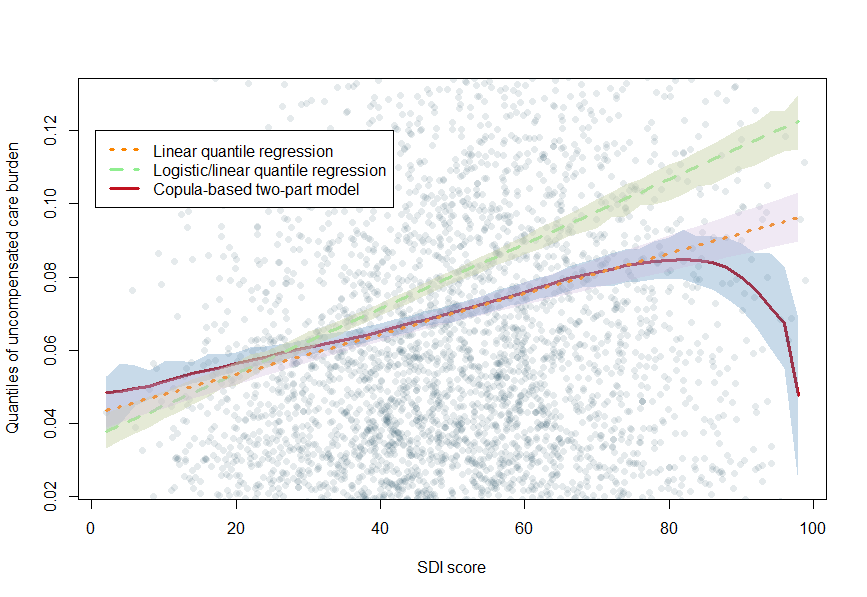}}%	
	\vspace{0.5cm}
\end{subfigure}
\begin{subfigure}{.5\textwidth}
	\makebox[\textwidth][c]{\includegraphics[width=3.4in, height=2.7in]{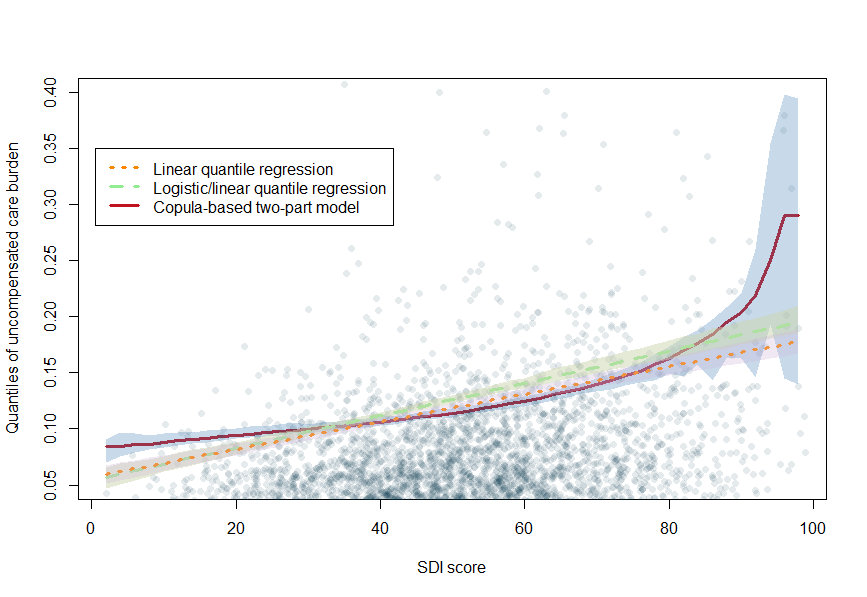}}%	
	\vspace{0.5cm}
\end{subfigure}

	\caption{Uncompensated care burden versus SDI score, all pointwise confidence bands are derived from $300$ bootstrap samples. \textbf{TopLeft:}  The estimated probability of occurrence for uncompensated care with logistic regression (green dash line) and copula binary model (red solid line); \textbf{TopRight:} The quantile estimates of level $50\%$ for linear quantile regression (orange dotted line), logistic/linear quantile regression (green dash line) and copula-based two-part model (red solid line); \textbf{BottomLeft:} The quantile estimates of level $70\%$ for the three models; \textbf{BottomRight:} The quantile estimates of level $90\%$ for the three models.}
	\label{fig:uncomp}
\end{figure}

%%%%%%%%%%%%%%%%%%%%%%%%%%%%%%%%%%%%%%%%%%%%%

%%%%%%%%%%%%%%%%%%%%%%%%%%%%%%%%%%%%%%%%%
\begin{figure}[H]
	\centering
	\begin{subfigure}{.49\textwidth}
		\makebox[\textwidth][c]{\includegraphics[width=3.4in, height=2.7in]{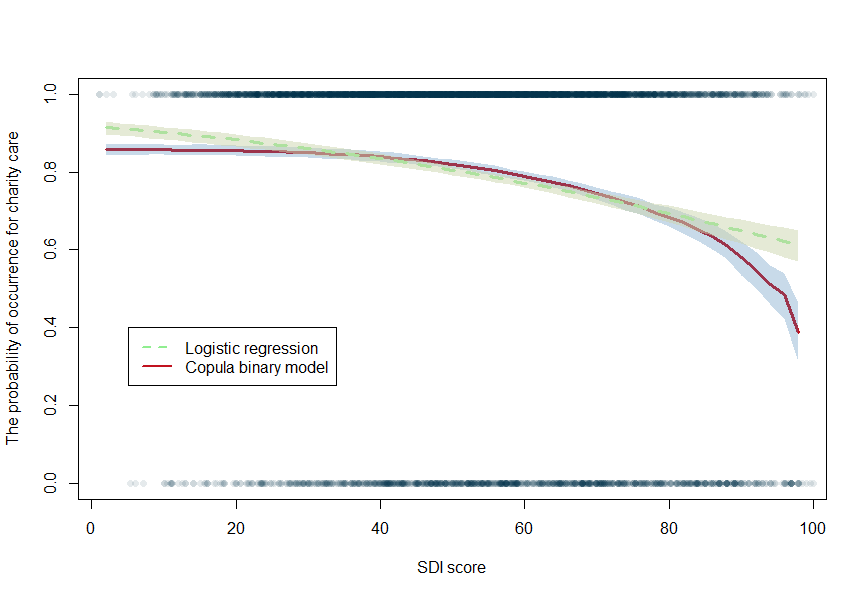}}%	
		\vspace{0.5cm}
	\end{subfigure}
	\begin{subfigure}{.5\textwidth}
		\makebox[\textwidth][c]{\includegraphics[width=3.4in, height=2.7in]{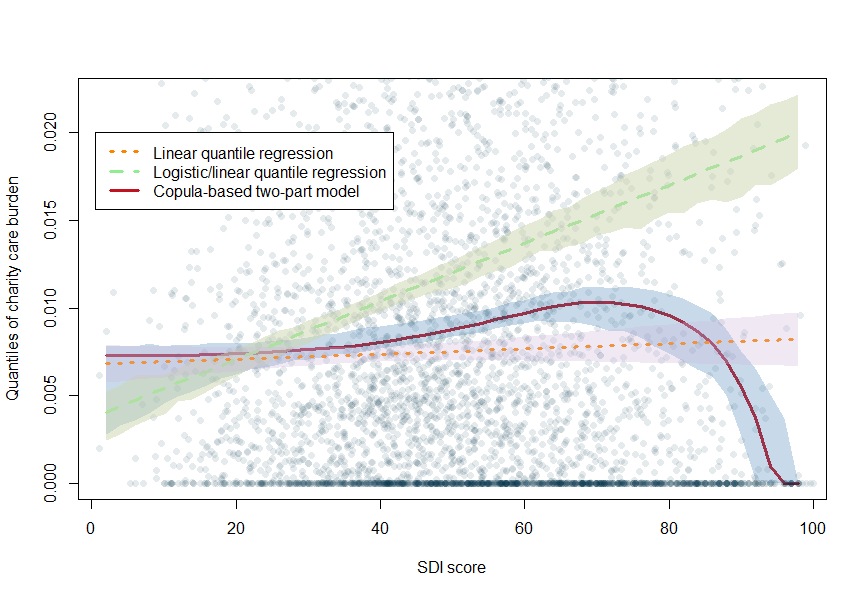}}%	
		\vspace{0.5cm}
	\end{subfigure}
	\begin{subfigure}{.49\textwidth}
	\makebox[\textwidth][c]{\includegraphics[width=3.4in, height=2.7in]{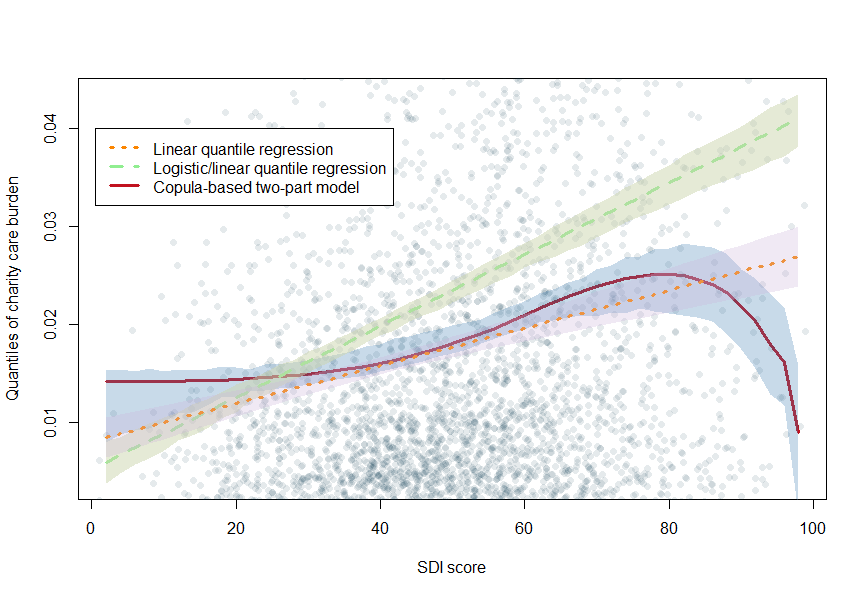}}%	
	\vspace{0.5cm}
\end{subfigure}
\begin{subfigure}{.5\textwidth}
	\makebox[\textwidth][c]{\includegraphics[width=3.4in, height=2.7in]{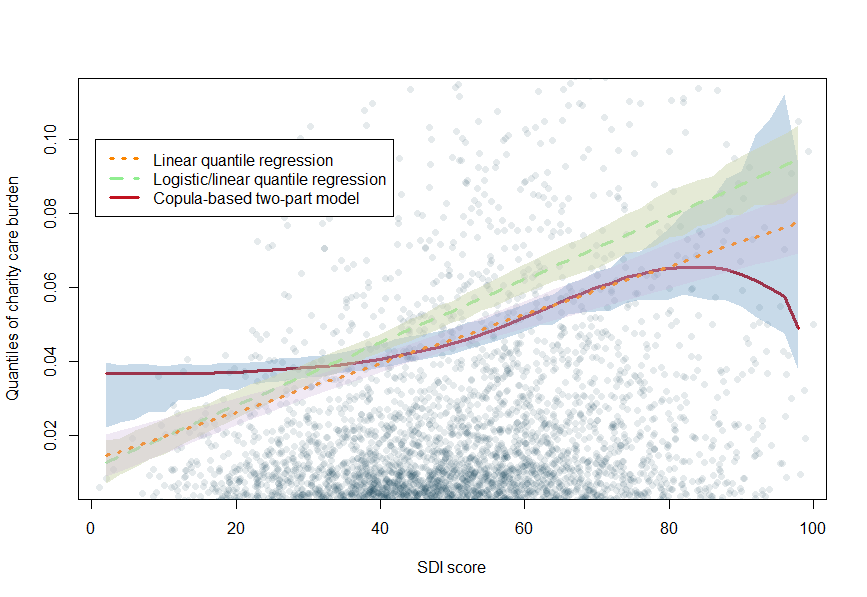}}%	
	\vspace{0.5cm}
\end{subfigure}

	\caption{Charity care burden versus SDI score, all pointwise confidence bands are derived from $300$ bootstrap samples. \textbf{TopLeft:}  The estimated probability of occurrence for charity care with logistic regression (green dash line) and copula binary model (red solid line); \textbf{TopRight:} The quantile estimates of level $50\%$ for linear quantile regression (orange dotted line), logistic/linear quantile regression (green dash line) and copula-based two-part model (red solid line); \textbf{BottomLeft:} The quantile estimates of level $70\%$ for the three models; \textbf{BottomRight:} The quantile estimates of level $90\%$ for the three models.}
	\label{fig:charity}
\end{figure}

%%%%%%%%%%%%%%%%%%%%%%%%%%%%%%%%%%%%%%%%%%%%%

%---------------------------------------------------------------------	
\section{Concluding remarks}\label{sec:conclusion}
%---------------------------------------------------------------------	

\noindent This paper proposes a copula-based semiparametric quantile regression framework for analyzing semicontinuous outcomes. 
The method explicitly accounts for the mixture structure induced by excess zeros and accommodates flexible dependence between the response and covariates. 
Large-sample properties of the proposed estimator are established, and simulation studies demonstrate improved finite-sample performance relative to logistic/linear quantile regression, particularly in settings with nonlinear dependence and substantial zero inflation. 
In an application to healthcare data, the proposed approach yields a more nuanced characterization of the relationship between social deprivation and care burdens, revealing heterogeneous effects across quantiles that are not captured by competing models.

Several directions for future research merit consideration. 
First, the proposed framework can be extended to accommodate censored semicontinuous outcomes, which commonly arise in health services and cost analyses. 
Second, incorporating clustered or longitudinal data structures, such as repeated observations within patients, hospitals, or geographic regions, would allow the model to account for within-unit dependence through hierarchical or copula-based random effects. 
Third, further investigation of extremal quantiles may provide additional insights into tail behavior and risk concentration, particularly for identifying populations at greatest financial vulnerability. 
Finally, extensions to higher-dimensional covariate settings and data-driven copula selection procedures would enhance the scalability and practical applicability of the proposed methodology.

% future work: composite quantile regression, censored quantile regression (only response is censored),

% multiple excess, missing data, random effects model and future work added form Dr. Hussein

\section*{Appendix: Proof of~\hyperref[thm:2024-06-09, 8:46PM]{Theorem~\ref{thm:2024-06-09, 8:46PM}}} \label{app}% Use section* to prevent numbering
\addcontentsline{toc}{section}{Appendix: Proofs} % Add to the table of contents

% \renewcommand{\thesubsection}{A.\arabic{subsection}} % Customize subsection numbering

% \subsection{Proof of~\hyperref[thm:2024-06-09, 8:46PM]{Theorem~\ref{thm:2024-06-09, 8:46PM}}}

\noindent This proof follows the methodology outlined in~\citet[Proof of Theorem 1 (i)]{Ling2022}. We prove the consistency according to the three sub-intervals.
\begin{itemize}
	\item For $\tau < C_{V|\bm{U}}(p_0|F_{\bm{X}}(\bm{x}); \bm{\theta}_1)$. Note that, by~\citet[Proof of Theorem 1]{Mesfioui2023}, under \hyperref[ass:2024-06-10, 10:11AM]{Assumption~\ref{ass:2024-06-10, 10:11AM}} (iii)--(v), one has the weak consistency of estimator ${C}_{V|\bm{U}}\left(\widehat{p}_0|\widehat{F}_{\bm{X}}(\bm{x}); \widehat{\bm{\theta}}_1\right)$, which implies that $P\left(A_{2n} \cup A_{3n}\right)\to 0$. For any $\varepsilon >0$, one has
	\begin{align*}
		P\left(\left|\widehat{Q}(\tau|\bm{x})-0\right|>\varepsilon \right) & = P\left(\left|\widehat{Q}(\tau|\bm{x})\right|>\varepsilon \right) \\
		 & \leq P\left( \left\{\left|\widehat{Q}(\tau|\bm{x})\right|> \varepsilon\right\} \cap A_{1n} \right)  + P\left(A_{2n} \cup A_{3n}\right)\\
		 &=P\left(A_{2n} \cup A_{3n}\right)\to 0.
	\end{align*}
	
	\item For $\tau = C_{V|\bm{U}}(p_0|F_{\bm{X}}(\bm{x}); \bm{\theta}_1)$. Note that, under \hyperref[ass:2024-06-10, 10:11AM]{Assumption~\ref{ass:2024-06-10, 10:11AM}} (iii)--(v), by~\citet[][Theorem 1]{Mesfioui2023}, $\sqrt{n}\left\{C_{V|\bm{U}}(p_0|F_{\bm{X}}(\bm{x}); \bm{\theta}_1)- {C}_{V|\bm{U}}\left(\widehat{p}_0|\widehat{F}_{\bm{X}}(\bm{x}); \widehat{\bm{\theta}}_1\right)\right\}$ is asymptotically normal. Hence, one has 
	\begin{align*}
		P(A_{1n}) &= P\left(\sqrt{n}\left\{C_{V|\bm{U}}(p_0|F_{\bm{X}}(\bm{x}); \bm{\theta}_1)- {C}_{V|\bm{U}}\left(\widehat{p}_0|\widehat{F}_{\bm{X}}(\bm{x}); \widehat{\bm{\theta}}_1\right)\right\} < 0\right) \to \frac{1}{2}, \\
		P(A_{2n}) &=P\left( 0 \leq \sqrt{n}\left\{C_{V|\bm{U}}(p_0|F_{\bm{X}}(\bm{x}); \bm{\theta}_1)- {C}_{V|\bm{U}}\left(\widehat{p}_0|\widehat{F}_{\bm{X}}(\bm{x}); \widehat{\bm{\theta}}_1\right)\right\}  \leq n^{1/2-\delta}\right) \to \frac{1}{2},\\
		P(A_{3n})&\to 0.
	\end{align*}
Therefore, 	
		\begin{align*}
		&P\left(\left|\widehat{Q}(\tau|\bm{x})-0\right|>\varepsilon \right)\\ 
        &\quad = P\left( \left\{\left|\widehat{Q}(\tau|\bm{x})\right|> \varepsilon\right\} \cap A_{1n} \right) +P\left( \left\{\left|\widehat{Q}(\tau|\bm{x})\right|> \varepsilon\right\} \cap A_{2n} \right)\\
		&\qquad  + P\left( \left\{\left|\widehat{Q}(\tau|\bm{x})\right|> \varepsilon\right\} \cap A_{3n} \right)  \\
		& \quad\leq 0 + P\left( \left\{\left|\widehat{Q}(\tau|\bm{x})\right|> \varepsilon\right\} \cap A_{2n} \right)  + P\left(A_{3n}\right)\\
		& \quad = P\left( \left\{\left|\widehat{F}_{Y|Y>0}^{-1}\left({D}_{V^+|\bm{U}^+}^{-1}\left(\widehat{\tau}_s\left({C}_{V|\bm{U}}\left(\widehat{p}_0|\widehat{F}_{\bm{X}}(\bm{x}); \widehat{\bm{\theta}}_1\right)+n^{-\delta}, \bm{x}\right)\big|\widehat{F}_{\bm{X}|Y>0}(\bm{x}); \widehat{\bm{\theta}}_2\right)\right)\right.\right.\right.\\ 
		&\qquad \left.\left.\left.\cdot \frac{C_{V|\bm{U}}(p_0|F_{\bm{X}}(\bm{x});\bm{\theta}_1)-{C}_{V|\bm{U}}\left(\widehat{p}_0|\widehat{F}_{\bm{X}}(\bm{x}); \widehat{\bm{\theta}}_1\right)}{n^{-\delta}}\right|> \varepsilon\right\} \right)  + P\left(A_{3n}\right)\\
		&\quad\to 0,
	\end{align*}
where the limit due to the fact that, under \hyperref[ass:2024-06-10, 10:11AM]{Assumption~\ref{ass:2024-06-10, 10:11AM}} (i), one has, for $0 < \delta < 1/2$, 
	
\begin{align*}
&\widehat{F}_{Y|Y>0}^{-1}\left({D}_{V^+|\bm{U}^+}^{-1}\left(\widehat{\tau}_s\left({C}_{V|\bm{U}}\left(\widehat{p}_0|\widehat{F}_{\bm{X}}(\bm{x}); \widehat{\bm{\theta}}_1\right)+n^{-\delta}, \bm{x}\right)\big|\widehat{F}_{\bm{X}|Y>0}(\bm{x}); \widehat{\bm{\theta}}_2\right)\right)\\
&\quad  \cdot \frac{C_{V|\bm{U}}(p_0|F_{\bm{X}}(\bm{x}); \bm{\theta}_1)-\widehat{C}_{V|\bm{U}}\left(\widehat{p}_0|\widehat{F}_{\bm{X}}(\bm{x});\widehat{\bm{\theta}}_1\right)}{n^{-\delta}} \\
 & \quad = \widehat{F}_{Y|Y>0}^{-1}\left({D}_{V^+|\bm{U}^+}^{-1}\left(\frac{n^{-\delta}}{1-{C}_{V|\bm{U}}\left(\widehat{p}_0|\widehat{F}_{\bm{X}}(\bm{x}); \widehat{\bm{\theta}}_1\right)}\Big| \widehat{F}_{\bm{X}|Y>0}(\bm{x}); \widehat{\bm{\theta}}_2\right)\right)n^{\delta-1/2}\\
 &\qquad  \cdot \left\{n^{1/2} \left[{C_{V|\bm{U}}(p_0|F_{\bm{X}}(\bm{x}); \bm{\theta}_1)-{C}_{V|\bm{U}}\left(\widehat{p}_0|\widehat{F}_{\bm{X}}(\bm{x}); \widehat{\bm{\theta}}_1\right)}\right]\right\}\\
 &\quad = o_p(1)\cdot O_p(1)=o_p(1).
\end{align*}

	\item For $\tau > C_{V|\bm{U}}(p_0|F_{\bm{X}}(\bm{x}); \bm{\theta}_1)$. One has $P(A_{3n})\to 1$ and $P(A_{1n}\cup A_{2n}) \to 0$. Then
	\begin{align}\label{eq:2024-06-10, 1:13PM}
	&P\left(\left|\widehat{Q}(\tau|\bm{x})-Q(\tau |\bm{x})\right|>\varepsilon \right)  \notag\\
 &\quad \leq P(A_{1n}\cup A_{2n}) + P\left(\left|\widehat{F}_{Y|Y>0}^{-1}\left({D}_{V^+|\bm{U}^+}^{-1}\left(\widehat{\tau}_s\left(\tau, \bm{x}\right)\big|\widehat{F}_{\bm{X}|Y>0}(\bm{x}); \widehat{\bm{\theta}}_2\right)\right)-Q(\tau |\bm{x})\right|>\varepsilon \right).
	\end{align}
Under \hyperref[ass:2024-06-10, 10:11AM]{Assumption~\ref{ass:2024-06-10, 10:11AM}} (iv)-(vi), by~\citet[Theorem 2]{Remillard2017}, \\ $\widehat{F}_{Y|Y>0}^{-1}\left({D}_{V^+|\bm{U}^+}^{-1}\left(\widehat{\tau}_s\left(\tau, \bm{x}\right)\big|\widehat{F}_{\bm{X}|Y>0}(\bm{x}); \widehat{\bm{\theta}}_2\right)\right)$ is consistent for $Q(\tau |\bm{x})$.  Therefore, Equation~\eqref{eq:2024-06-10, 1:13PM} converges to zero.
\end{itemize}
This completes the proof. \qed

\bibliographystyle{chicago}
\bibliography{Mybib}	
\end{document}